\documentclass{l4dc2025}


\title{Data-Driven and Stealthy Deactivation of Safety Filters}
\usepackage{times}
\usepackage{pgfplots}
\usepackage{tikz}
\usepackage{fontawesome5}
\usepackage{algorithm}
\usepackage[noend]{algpseudocode}
\usepackage{tcolorbox}

\algrenewcommand\textproc{\textsc}
\DeclareMathOperator*{\argmax}{arg\,max}
\DeclareMathOperator*{\argmin}{arg\,min}
\DeclareMathOperator*{\logdet}{log\,det}

\DeclareMathOperator{\inter}{int}

\author{%
 \Name{Daniel Arnström} \Email{daniel.arnstrom@it.uu.se}\\
 \addr Department of Information Technology, Uppsala University
 \AND
 \Name{André M.H. Teixeira} \Email{andre.teixeira@it.uu.se}\\
 \addr Department of Information Technology, Uppsala University
}

\begin{document}
\maketitle
\renewcommand{\baselinestretch}{1.0}

\definecolor{set19c1}{HTML}{E41A1C}
\definecolor{set19c2}{HTML}{377EB8}
\definecolor{set19c3}{HTML}{4DAF4A}
\definecolor{set19c4}{HTML}{984EA3}
\definecolor{set19c5}{HTML}{FF7F00}
\definecolor{set19c6}{HTML}{FFFF33}
\definecolor{set19c7}{HTML}{A65628}
\definecolor{set19c8}{HTML}{F781BF}
\definecolor{set19c9}{HTML}{999999}

\maketitle
\thispagestyle{empty}
\pagestyle{empty}
\newtheorem{assumption}{Assumption}
\newtheorem{problem}{Problem}

\SetKwBlock{Repeat}{repeat}{}
\newcommand{\icol}[1]{
    \left[\begin{smallmatrix}#1\end{smallmatrix}\right]%
}

\pgfplotstableread{result/iddata.dat}{\iddata}
\pgfplotstableread{result/latenttraj.dat}{\latenttraj} \pgfplotstableread{result/latensafeset.dat}{\latentsafeset}
\pgfplotstableread{result/attack.dat}{\attack}

\begin{abstract}
 Safety filters ensure that control actions that are executed are always safe, no matter the controller in question. Previous work has proposed a simple and stealthy false-data injection attack for deactivating such safety filters. This attack injects false sensor measurements to bias state estimates toward the interior of a safety region, making the safety filter accept unsafe control actions. The attack does, however, require the adversary to know the dynamics of the system, the safety region used in the safety filter, and the observer gain. In this work we relax these requirements and show how a similar data-injection attack can be performed when the adversary only observes the input and output of the observer that is used by the safety filter, without any a priori knowledge about the system dynamics, safety region, or observer gain. In particular, the adversary uses the observed data to identify a state-space model that describes the observer dynamics, and then approximates a safety region in the identified embedding. We exemplify the data-driven attack on an inverted pendulum, where we show how the attack can make the system leave a safe set, even when a safety filter is supposed to stop this from happening. 
\end{abstract}
\section{Introduction}

While learning-based controllers can improve performance over classical controllers \citep{coulson2019data,dorfler2023hottake}, they are seldom used in safety-critical applications due to their lack of \textit{safety guarantees}. So-called \textit{safety filters} \citep{wabersich2023data,tomlin2003computational,ames2017cbf,wabersich2018linear,hobbs2023runtime} can, in a modular fashion, augment any such unsafe learning-based controller with safety guarantees. 
A safety filter takes in a desired control action and the current state of the system, and outputs a filtered control action that guarantees a ``safe'' behaviour of the system. Since such filters separate safety from performance, any controller from the plethora of data-driven and learning-based controllers can be combined with a safety filter to give good performance together with safety guarantees.

\begin{figure}[H]
    \centering
    \begin{tikzpicture}[scale=0.8, transform shape]
    \usetikzlibrary{shapes,arrows}
    \usetikzlibrary{calc}
    \tikzstyle{block} = [
    rectangle, draw, fill=white!20, 
    text width=5.25em, text centered, 
    rounded corners, minimum height=3em]

    \node[block] (detect) at (-3,-1.05) {\faBell \:Detector};
    \node[block] (test1) at (0,1.05) {\faGamepad\:Primary \\Controller};
    \node[block] (test2) at (3.75,0) {\faShield* Safety \\ Filter};
    \node[block] (test3) at (7,0) {\faPlane\:Plant};
    \node[block] (observ) at (0,-1.05) {\faBinoculars\:Observer};
    \node (spy) at (4,-2.3) {\huge \faUserSecret};
    \node[red,right of=observ,xshift=1.5em,yshift=-1em] (adv-y) {$y^a$};
    \path [draw, -latex'] (test1) -- ++(2.4,0)  node[above]{$u_c$}-|(test2);
    \path [draw, -latex'] (test2) ++(1.6,0) |- ($(observ.east)+(0,+0.1)$);
    \path [draw, -latex'] (test2) --node[above]{$u$} (test3);
    \path [draw, latex'-] ($(observ.east)+(0,-0.1)$) -- ++(3,0) -| node[above right,yshift=0em]{$y$} (test3);
    \path [draw, -latex'] (observ) --node[left]{$\hat{x}$} (test1);
    \path [draw, -latex'] (observ) |- (test2);
    \path [draw, -latex'] (observ) --node[above]{$r$} (detect);
    \path[draw,-latex,red] (spy) -- node[right]{} ++(0,1);

\end{tikzpicture}
    \caption{\small Overview of the system architecture considered in \cite{ecc24} and in this paper. The safety filter produces a safe control action $u$ given a desired control $u_c$ and estimated state $\hat{x}$. An adversary tries to deactivate this filter through false-data injections on the communication channel between the sensors and the observer by replacing the true measurement $y$ with $y^a$.}
    \label{fig:overview}
\end{figure}
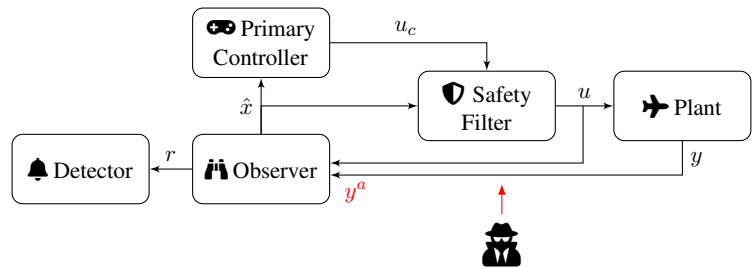

At the same time, cyber-physical systems (CPSs) are becoming more prevalent and enable advanced control systems that are efficient and resilient \cite{dibaji2019systems}. CPSs potential comes from their integration of communication, computation, and control technologies. The cyber component of CPSs do, however, open up for new vulnerabilities in the form of cyber attacks \citep{teixeira2015secure}\cite[\S 4.C]{roadmap}. 
Since safety filters are the last layer before a control command is applied, they are prime targets for cyber attacks; moreover, if such an attack successfully compromises a safety-critical system, the consequence can be severe \cite[\S 4.B]{roadmap}. 

In this paper we consider cyber attacks that target safety filters. In particular, we consider an attack that injects false data on the communication channel from sensor measurements to a state observer, as illustrated in Figure \ref{fig:overview}. The goal of the attack is to produce synthetic measurements $y^a$ that ``deactivate'' the safety filter, which in turn allows for dangerous control actions to be applied to the plant. 
The same attack scenario has been considered in \cite{ecc24}, but here we consider the far more realistic scenario when the adversary has less information. Specifically, in contrast to \cite{ecc24}, we assume that the adversary: ($i$) does not know the dynamics of the system; ($ii$) does not know the safety set that the safety filter uses; ($iii$) does not know the observer gain. Instead, the adversary learns all of these components by just observing data from the observer. 
In addition, the attack framework considered herein also applies to safety filters that are not based on control barrier functions (CBFs), while the work in \cite{ecc24} only considered CBF-based safety filters. 

\paragraph{Notation}
Given a set $\mathcal{S} \triangleq \{s : g(s) \geq 0\}$, defined as the super set of a function $g:\mathbb{R}^n \to \mathbb{R}^m$,  we let its boundary be defined as $\partial S \triangleq \{s : g(s) = 0\}$ and its interior be defined as $\inter S \triangleq \{s : g(s) > 0\}$. 
\section{Problem formulation}
In this paper we consider dynamical systems of the form 
\begin{equation}
    \label{eq:sys}
    x_{k+1} = f(x_k,u_k),
\end{equation}
with state $x \in \mathbb{R}^{n_x}$, control $u \in \mathcal{U}  \subseteq \mathbb{R}^{n_u}$, and  the dynamics $f : \mathbb{R}^{n_x}\times \mathbb{R}^{n_u} \to \mathbb{R}^{n_x}$.
Specifically, we are interested in the \textit{safety} of such systems. A system is safe if the states remain within a set $\mathcal{X}\subseteq \mathbb{R}^{n_x}$, where $\mathcal{X}$ encodes safety requirements. We formalize the notion of safety as follows
\begin{definition}[Safety]
    Given a set of admissible states $\mathcal{X}\subseteq \mathbb{R}^{n_x}$ and a set of admissible controls $\mathcal{U}\subseteq \mathbb{R}^{n_u}$, the system in \eqref{eq:sys} is said to be \textit{safe} if 
        $x_k \in \mathcal{X} \text{ and } u_k \in \mathcal{U} \text{ for all } k \geq 0.$
\end{definition}

Moreover, the system of interest generates measurements $y\in \mathbb{R}^{n_y}$ according to
\begin{equation}
 y_k = h(x_k) + e_k,
\end{equation}
with the measurement function $h: \mathbb{R}^{n_x} \to \mathbb{R}^{n_y}$ and measurement noise $e_k$; for our purpose, we make no particular assumptions on $e_k$. 

To control the system an estimate $\hat{x}$ of the state $x$ is needed. To this end, the controller uses an observer of the form 
\begin{equation}
    \label{eq:observer-dyn}
        \hat{x}_{k+1} = \hat{f}(\hat{x}_k,u_k,y_k),\qquad 
        \hat{y}_k = h(\hat{x}_k),
\end{equation}
with the goal of making the state estimate $\hat{x} \approx x$. 
A common form of the observer is 
\begin{equation}
    \label{eq:common-obs}
    \hat{f}(x,u,y) = f(x,u) + K(x) (y-h(x)),
\end{equation}
where $K: \mathbb{R}^{n_x} \to \mathbb{R}^{n_x \times n_y}$ is an observer gain. This form covers, for example, classical \citep{kalman1960new}, extended \citep{jazwinski2007stochastic}, and unscented \citep{julier2004unscented} Kalman filters. 

Furthermore, the controller uses an anomaly detector \citep{murguia2019model} that triggers an alarm if the measurement $y$ is too different from the predicted measurement $\hat{y}$. That is, we assume that the detector triggers an alarm if the residual $r \triangleq y-\hat{y} = y - h(\hat{x})$ is large, e.g., in terms of a cumulative sum \citep{kurt2018cusum} or a $\chi^2$-test \citep{brumback1987chi}. In particular, we consider anomaly detectors that trigger an alarm if  
\begin{equation}
    \label{eq:detector}
  \|r\| > \delta,
\end{equation}
where $\|\cdot\|$ is some norm and $\delta>0$ is a threshold that determines the sensitivity of the detector. 

Finally, the controller uses a \textit{safety filter} that ensures that states remain within a safe set $\mathcal{S} \subseteq \mathcal{X}$. This safe set is defined as the super-level set of a function $h_S$:
\begin{equation}
    \label{eq:safeset}
    \mathcal{S} \triangleq \{x : h_S(x) \geq 0\}.
\end{equation}
The set of control actions that retain the state in $S$ is defined as $\mathcal{U}_S(x) \triangleq \{u \in \mathcal{U} : f(x,u) \in \mathcal{S}\}$, and the corresponding safety filter is given by 
\begin{equation}
    \label{eq:sf}
    u = \argmin_{u\in \mathcal{U}_S(x)} \|u-u_c\|,
\end{equation}
where the desired control $u_c$ is projected onto the safe controls $\mathcal{U}_S$ to produce the actual control $u$ that is applied to the system.
Safety filters of the form \eqref{eq:sf} can be based on, for example, control barrier functions \citep{ames2017cbf}, predictive safety filters \citep{wabersich2021predictive}, or Hamilton-Jacobi reachability \citep{bansal2017hamilton}. Excellent surveys on safety filters are given in \cite{wabersich2023sf} and \cite{hobbs2023runtime}.

\subsection{Adversary model}
As mentioned in the introduction, and visualized in Figure~\ref{fig:overview}, we consider a scenario when an adversary performs a false-data injection attack on the measurement channel. Formally, we assume that the adversary has access to the following vulnerability: 
\begin{assumption}[Observe and modify measurements]
    \label{as:y-mod}
    The adversary can observe and modify $y$. 
\end{assumption}

The ultimate objective of the adversary is to make the system state leave the safe set, i.e., to make $x \notin \mathcal{S}$. If the observer in \eqref{eq:observer-dyn} perfectly estimates the state, that is, if $\hat{x} = x$, this would be impossible due to the safety filter. The adversary's aim is, hence, to bias the state estimate $\hat{x}$ to give a false sense of safety, which could allow for unsafe control actions to be let through the safety filter. This is formalized with the concept of \textit{deactivation} 
\begin{definition}[Deactivation]
    \label{def:deact}
    For a safety filter of the form \eqref{eq:sf}, the state estimate $\hat{x}$ is said to deactivate the safety filter at state $x$ if $\exists u \in \mathcal{U}_S(\hat{x})$ such that $u \notin \mathcal{U}_S(x)$.
\end{definition}

Briefly put, then, the adversary wants to modify the measurement $y$ such that the state estimate $\hat{x}$ deactivates the safety filter at the actual state.

If, in addition to Assumption \ref{as:y-mod}, the adversary knows the state estimate $\hat{x}$, the observer dynamics $\hat{f}$, and the measurement function $h$, they can, as is shown in \cite{ecc24}, perform a false-data injection attack by replacing the true measurement $y$ with
\begin{equation}
    \label{eq:ideal-attack}
        y^a = \argmax_{y^a}\: \nabla h_S(\hat{x})^T \left(\frac{\partial \hat{f}}{\partial y}\Bigr|_{x=\hat{x}} y^a\right) \text{subject to }\|y^a-h(\hat{x})\| \leq \delta.
\end{equation}
Intuitively, the objective of \eqref{eq:ideal-attack} is a local change in distance between $\hat{x}$ and the edge $\partial \mathcal{S} \triangleq \{x: h_S(x) = 0\}$ of the safe set $\mathcal{S}$; the constraint, in turn, ensures that the detector is not triggered, i.e., that the adversary remains stealthy. Thus, the attack will stealthily bias the state estimates towards the interior of the safe set, which make the safety filter overestimate the safety of the system, and might therefore let through unsafe control actions to the plant. For some norms, the optimization problem in \eqref{eq:ideal-attack} has a closed-form solutions (cf. Corollary 1 and 2 in \cite{ecc24}.)  

\begin{remark}[Independence of $\frac{\partial \hat{f}}{\partial y}$ with respect to $y$ and $u$]
    \label{rem:indep}
    Note that the term $\frac{\partial \hat{f}}{\partial y}\bigr|_{x=\hat{x}}$ in \eqref{eq:ideal-attack} implicitly assumes that the partial derivative of $\hat{f}$ with respect to $y$ is independent of $y$ and $u$. This is, for example, the case for observers of the common form \eqref{eq:common-obs}.  
\end{remark}

In contrast to \cite{ecc24}, this work considers the case when the observer dynamics in \eqref{eq:observer-dyn} and the safe set in \eqref{eq:safeset} are \textit{unknown} to the adversary, which make the \textit{ideal} attack in \eqref{eq:ideal-attack} impossible to implement. Instead, we assume that the adversary can observe the input ($y$, $u$) and output $\hat{y}$ of the observer, which motivates appending the following assumptions to Assumption~\ref{as:y-mod}.

\begin{assumption}[Observe control actions]
    \label{as:u}
    The adversary can observe $u$. 
\end{assumption}
\begin{assumption}[Observe predicted measurement]
    \label{as:yhat}
    The adversary can observe $\hat{y}$.
\end{assumption}
{\color{black}
Assumption \ref{as:yhat} can be satisfied in practice through several means. It is, for example, equivalent to observing the residual $r$ that is sent to the anomaly detector, since $\hat{y} = y- r$, with $y$ already being known from Assumption \ref{as:y-mod}.
Another, stronger, assumption that implies Assumption \ref{as:yhat} is that the adversary can directly observe the state estimate $\hat{x}$ (and knows the measurement function $h$.) Furthermore, if the adversary has some knowledge about the observer and controller structures they might perfectly estimate their internal states, which can give information about $r$, $\hat{y}$, and $\hat{x}$ \citep{umsonst2021control,umsonst2019observer}. 
}

In summary, the objective of the adversary, and the main problem formulation considered in this paper, is as follows: 
    \begin{tcolorbox}
        \textbf{Problem 1}: By observing the inputs $u$ and $y$ and the output $\hat{y}$ from  an observer of the form \eqref{eq:observer-dyn}, modify the measurement $y$ such that the state estimate $\hat{x}$ deactivates the safety filters of the form \eqref{eq:sf} at the true state $x$.
    \end{tcolorbox}

\section{Data-driven attack}
The main idea behind the adversary's attack is to replace the model-based components in \eqref{eq:ideal-attack} by surrogates that are learned using data from the system. For example, the adversary learns a mapping $\tilde{f}$ that predicts the observer output $\hat{y}$ given its input $u$ and $y$ to substitute the unknown observer dynamics $\hat{f}$. Furthermore, the collected data is used to identify a safe set similar to \eqref{eq:safeset}. Since the states used by the observer in \eqref{eq:observer-dyn} is assumed to be unknown, the learned state-space model will have states $z$ that do not necessarily correspond to $\hat{x}$. As a result, the safe set is identified in the \textit{latent} space where $z$ resides, and not the nominal state space where $\hat{x}$ resides.   

The learned model and safe set are then used to perform a false-data injection similar to \eqref{eq:ideal-attack}, where the unknown mappings $\hat{f}, h,$ and $h_S$ are replaced with learned mappings $\tilde{f}$, $\tilde{h}$, and $\tilde{h}_S$. To perform such an attack, the adversary also makes use of the learned dynamics and online data to update the latent state $z$ (instead of, as in the ideal attack in \eqref{eq:ideal-attack}, updating $\hat{x}$.)

The attack can, hence, be split into an \textit{online} phase and an \textit{offline} phase, which will now be described in detail in Section \ref{ssec:offline} and \ref{ssec:online}, respectively.

\subsection{Offline phase}
\label{ssec:offline}
In the initial \textit{offline} phase the adversary collects data by observing $u,y,$ and $\hat{y}$, which is used to:
    ($i$) identify a state-space model similar to \eqref{eq:observer-dyn};
    ($ii$) estimate a safe set similar to \eqref{eq:safeset} in latent space;
    ($iii$) lower-bound the detector threshold $\delta$;
Details on $(i)$, $(ii)$, and $(iii)$ are given in Section~\ref{sssec:sysid},\ref{sssec:safeset}, and \ref{sssec:thresh} respectively.

\begin{figure}[H] \centering
    \begin{tikzpicture}[scale=0.8, transform shape]
    \usetikzlibrary{shapes,arrows}
    \usetikzlibrary{calc}
    \usetikzlibrary{shapes.geometric}
    \tikzstyle{block} = [
    rectangle, draw, fill=white!20, 
    text width=15em, text centered, 
    rounded corners, minimum height=1em]


    \node[block,minimum height=6em] (data) at (-4,-1.65) {Collect data \\ $\mathcal{D} = \{(u_i,y_i,\hat{y}_i)\}_{i=1}^N$};
    \node[block] (sysid) at (3.5,-0.65) {Learn State-Space Model};
    \node[block] (sim) at (3.5,-1.75) {Estimate Safe Set in Latent Space};
    \node[block] (thresh) at (3.5,-2.5) {Lower-Bound Detector Threshold};

    \path [draw, -latex] (data.18) -- node[above,xshift=25pt]{} (sysid);
    \path [draw, -latex] (data.358) -- node[above]{} (sim);
    \path [draw, -latex] (data.344) -- node[below,xshift=25pt]{} (thresh);
    \path [draw, -latex] (sysid) -- node[right]{$\tilde{f}$} (sim);
    \path [draw, -latex] (sysid) -- node[above]{$\tilde{f}, \tilde{h}$} (8.5,-0.65);
    \path [draw, -latex] (sim) --node[above]{$\tilde{h}_S$} (8.5,-1.75);
    \path [draw, -latex] (thresh) --node[above]{$\tilde{\delta}$} (8.5,-2.5);

\end{tikzpicture}
    \caption{The pipeline for the offline phase of the data-driven attack.}
    \label{fig:pipeline}
\end{figure}
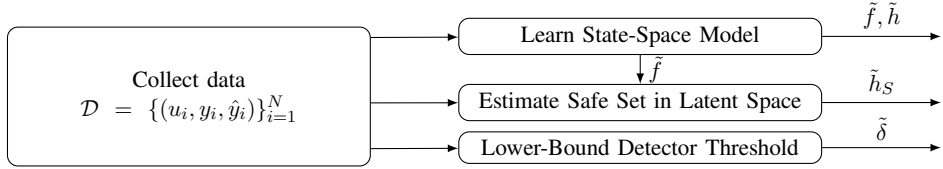

\subsubsection{System identification}
\label{sssec:sysid}
Assume that the adversary has observed the data sequence $\mathcal{D} = \{(u_i,y_i,\hat{y}_i)\}_i$ from \eqref{eq:observer-dyn}. 
To ensure that the detector is not to triggered when the measurement $y$ is modified, the adversary needs to have an idea where the controller's predicted measurement will be, i.e., it needs a predictive model that outputs $\tilde{y}$ such that $\tilde{y} \approx \hat{y}$. Given such a model, the adversary can stealthily modify the measurement $y$ by selecting it such that $\|y-\tilde{y}\|\leq\delta$ (if the threshold $\delta$ is known, see Section \ref{sssec:thresh}.)

The adversary is, hence, interested in learning a state-space model with a \textit{latent state} $z\in \mathbb{R}^{n_z}$ of the form
\begin{equation}
    \label{eq:latent-dyn}
    \begin{aligned}
        z_{k+1} = \tilde{f}(z_k,u_k,y_k), \qquad \tilde{y}_{k} = \tilde{h}(z_k).
    \end{aligned}
\end{equation}
such that $\tilde y_k \approx \hat{y}_k$ for all the data in $\mathcal{D}$. Note that $\tilde{y}\approx \hat{y}$ does not necessarily imply that the latent state $z$ is the same as the estimated state $\hat{x}$, which complicates the safe set estimation (see Section \ref{sssec:safeset}.)

There are several approaches to identifying models of the form \eqref{eq:latent-dyn}, including subspace methods \citep{VANOVERSCHEE199475}, Maximum Likelihood \citep{schon2011system}, deep learning \citep{gedon2021deep}, variational inference \citep{doerr2018probabilistic}, and direct nonlinear programming \citep{bemporad2024linear}, to name a few.
\begin{remark}[Motivation for state-space models]
    A predictive model that estimates $\hat{y}$ can be of other forms than the state-space model in \eqref{eq:latent-dyn} (For example, in the form of an ARX model, or from any other classical model classes in system identification \cite{ljung1999sysid}) . Having a state-space model will, however, be essential for the estimation of a safe set that is described in Section \ref{sssec:safeset}. 
\end{remark}

{\color{black}
Typically, we have that $\varepsilon_k \triangleq \hat{y}_k-\tilde{y}_k \neq 0$, since the $\tilde{f}$ never completely match $\hat{f}$ due to, for example, measurement and process noise. To ensure that $\tilde{y}$ remains close to $\hat{y}$ (i.e., to remain stealthy), one can amend the state-space model in \eqref{eq:latent-dyn} with an observer, resulting in the \textit{forward innovation model} \citep[Ch. 3]{van2012subspace} 
\begin{equation}
    \label{eq:latent-dyn-obs}
    \begin{aligned}
        z_{k+1} = \tilde{f}(z_k,u_k,y_k)+\tilde{K}(z_k) \varepsilon_k,\qquad \hat{y}_{k} = \tilde{h}(z_k) + \varepsilon_k,
    \end{aligned}
\end{equation}
where the gain $\tilde{K}(\cdot)$ also needs to be identified. This enables feedback from $\hat{y}$ during the online updates to ensure that $\hat{y} \approx \tilde{h}(z)$. For convenience, we introduce the compact notation $\tilde{f}(z,u,y,\hat{y}) \triangleq \tilde{f}(z,u,y)+\tilde{K}(z)(\tilde{h}(z)-\hat{y})$.
}

A well-studied (see, .e.g., \cite{van2012subspace}) special case of identifying state-space models of the form \eqref{eq:latent-dyn-obs} is when $\tilde{f}$ and $\tilde{h}$ are linear, which gives the linear state-space model
\begin{equation}
\label{eq:lin-ss}
  \begin{aligned}
      z_{k+1} = \tilde{A} z_k + \begin{bmatrix}
          \tilde{B}_u & \tilde{B}_y 
      \end{bmatrix} 
      \begin{bmatrix}
       u_k \\ y_k 
      \end{bmatrix} 
+  \tilde{K} \varepsilon_k, \qquad  
    \hat{y}_k = \tilde{C} z_k + \varepsilon_k.
  \end{aligned}
\end{equation}

Learning $\tilde{f}$ and $\tilde{h}$, hence, boils down to identifying the matrices $\tilde{A}, \tilde{B}_u,\tilde{B}_y, \tilde{K}$, and $\tilde{C}$. Note that we have that $\frac{\partial \tilde{f}}{\partial y} = \tilde{B}_y$, which is independent of both $u$ and $y$ in accordance with Remark~\ref{rem:indep}.

\begin{remark}[Inherent structure of $\hat{f}$]
    \label{rem:inh-struct}
Since we are are trying to identify the dynamics of an \textit{observer}, which is often based on simplified models that make the observer implementation manageable, the underlying dynamics is more structured than that of a ``natural'' system for which system identification is often applied.
\end{remark}

\subsubsection{Estimating safe set}
\label{sssec:safeset}
To deactivate the safe filter, the adversary wants to bias state estimates toward the interior of the safe set $\mathcal{S}$. Therefore, the adversary needs some information about the safe set $\mathcal{S}$. A complication is, however, that the adversary does not even know about the state space where $x$ resides, so ``estimating $\mathcal{S}$'' is not a well-defined problem given the available information. To address this complication, the adversary instead estimates a set in the space where the latent state $z$, from an identified model of the form \eqref{eq:latent-dyn}, resides. To connect these two state spaces, and the corresponding dynamics of $\hat{x}$ and $z$, we introduce the notion of topological equivalence, similar to what is introduced in \cite{jongeneel2021topological}. 

\begin{definition}[Topological equivalence]
    \label{def:top-equiv}
The dynamics $\hat{f}$ in \eqref{eq:observer-dyn} and $\tilde{f}$ \eqref{eq:latent-dyn} are \textit{topologically equivalent} if for all $u\in \mathbb{R}^{n_u}$ and $y \in \mathbb{R}^{n_x}$ there exists a homeomorphism $\mathcal{T}:\mathcal{X} \to \mathcal{Z}$ such that  $\tilde{f}(\cdot,u,y) \circ \mathcal{T} = \mathcal{T} \circ \hat{f}(\cdot,u,y)$, where $\circ$ denotes composition of functions.
\end{definition}

Basic topology gives the following results:
\begin{proposition}
    \label{prop:homeo}
    If $\tilde{f}$ and $\hat{f}$ are topologically equivalent through the homeomorphism $\mathcal{T}$, then \\
      $\partial \mathcal{T}(\mathcal{S}) = \mathcal{T}(\partial \mathcal{S})$ and 
      $\inter \mathcal{T}(\mathcal{S}) = \mathcal{T}(\inter \mathcal{S})$.
\end{proposition}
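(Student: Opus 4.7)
The plan is to identify, explicitly, a defining function for the transported set $\mathcal{T}(\mathcal{S})$, and then apply the notation-section definitions of boundary and interior verbatim. Since $\mathcal{T}$ is a homeomorphism and hence a bijection, one has
$$\mathcal{T}(\mathcal{S}) = \mathcal{T}(\{x : h_S(x) \geq 0\}) = \{z : h_S(\mathcal{T}^{-1}(z)) \geq 0\},$$
so $\mathcal{T}(\mathcal{S})$ is itself the super-level set of $\tilde{h}_S \triangleq h_S \circ \mathcal{T}^{-1}$. Continuity of $\mathcal{T}^{-1}$ (part of the homeomorphism property) ensures that $\tilde{h}_S$ inherits continuity from $h_S$.

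With this defining function in hand, the notation-section definitions directly give $\partial \mathcal{T}(\mathcal{S}) = \{z : h_S(\mathcal{T}^{-1}(z)) = 0\}$ and $\inter \mathcal{T}(\mathcal{S}) = \{z : h_S(\mathcal{T}^{-1}(z)) > 0\}$. Proving the boundary identity then reduces to the string of equivalences $z \in \partial \mathcal{T}(\mathcal{S}) \Leftrightarrow h_S(\mathcal{T}^{-1}(z)) = 0 \Leftrightarrow \mathcal{T}^{-1}(z) \in \partial \mathcal{S} \Leftrightarrow z \in \mathcal{T}(\partial \mathcal{S})$, where the last step uses injectivity of $\mathcal{T}$ together with $\mathcal{T}(\mathcal{T}^{-1}(z)) = z$. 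The interior case is identical with $=0$ replaced by $>0$.

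The main subtlety---rather than a genuine obstacle---is that the notation-section definitions of $\partial$ and $\inter$ are tied to a specific defining function rather than being intrinsic topological objects, so one must first exhibit a defining function for $\mathcal{T}(\mathcal{S})$ before those definitions apply; bijectivity of $\mathcal{T}$ makes this step immediate. The continuity part of the homeomorphism assumption is not strictly required for the algebraic equalities stated in the proposition, but it is what makes the induced defining function $\tilde{h}_S$ well-behaved, which matters for using $\tilde{h}_S$ downstream in the data-driven attack construction of Section \ref{sssec:safeset}. No structural assumption on $h_S$ (such as a non-vanishing gradient on $\{h_S=0\}$) is needed, since the argument is entirely set-theoretic once the defining function has been transported.
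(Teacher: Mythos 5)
Your proof is correct, but it takes a more elementary and, arguably, more faithful route than the paper, which offers no argument at all beyond the remark that ``basic topology gives the following results''---i.e., an appeal to the standard fact that a homeomorphism commutes with the (intrinsic, topological) interior and boundary operators. You instead work directly with the paper's notation-section convention, under which $\partial$ and $\inter$ are defined relative to a super-level-set defining function rather than topologically: you transport the defining function to $h_S \circ \mathcal{T}^{-1}$, exhibit $\mathcal{T}(\mathcal{S})$ as its super-level set, and reduce both identities to set-theoretic equivalences using only bijectivity of $\mathcal{T}$. This buys two things the paper's one-liner does not: it makes the statement provable verbatim under the paper's own (function-dependent, non-intrinsic) definitions of $\partial$ and $\inter$, and it isolates exactly which part of the homeomorphism hypothesis is doing work (none of the continuity, for the algebraic identities; continuity only matters for the transported defining function being usable downstream). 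The paper's implicit argument, by contrast, is shorter and holds for the intrinsic topological boundary and interior, which is presumably what the authors had in mind. One cosmetic caution: your symbol $\tilde{h}_S$ for $h_S \circ \mathcal{T}^{-1}$ collides with the paper's later use of $\tilde{h}_S$ for the \emph{data-driven cover} of the latent trajectory, which is an over-approximation of $\mathcal{T}(\mathcal{S})$ rather than its exact transported defining function; the two should not be conflated.
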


This shows that points in the interior of the safe set $\mathcal{S}$ directly map onto corresponding points in the interior of $\mathcal{T}(\mathcal{S})$. Consequently, if we bias the latent state $z$ into the interior of $\mathcal{T}(\mathcal{S})$, this will bias the state estimate $\hat{x}$ into the interior of $\mathcal{S}$. This connection is made more explicit in the following lemma.

\begin{lemma}[Safety of latent states]
    \label{lem:eqv-safe}
    Consider any sequence of inputs $\{u_i\}_{i=1}^N$ and outputs $\{y_i\}_{i=0}^N$ such that $u_i \in \mathcal{U}$ for all $i$. Assume that, subject to these sequences, the system \eqref{eq:observer-dyn} produces the trajectory $\{\hat{x}_i\}_{i=0}^N$ that is safe, i.e., $\hat{x}_i \in \mathcal{S}$ for all $i$. Similarly, assume and that system \eqref{eq:latent-dyn} produces the trajectory $\{z_i\}_{i=0}^N$ subject to the same input and output sequence, and with the initial state $z_0 = \mathcal{T}(\hat{x}_0)$. Then $z_i \in \mathcal{T}(\mathcal{S})$ if $f$ and $\tilde{f}$ are topologically equivalent through $\mathcal{T}$.
\end{lemma}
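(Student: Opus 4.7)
The plan is to prove a stronger statement by induction on $i$, namely that $z_i = \mathcal{T}(\hat{x}_i)$ for every $i \in \{0,1,\dots,N\}$. Once this is established, the safety conclusion follows trivially: since $\hat{x}_i \in \mathcal{S}$ by assumption, applying $\mathcal{T}$ gives $z_i = \mathcal{T}(\hat{x}_i) \in \mathcal{T}(\mathcal{S})$.

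For the base case, the hypothesis gives $z_0 = \mathcal{T}(\hat{x}_0)$ directly. For the inductive step, I would assume $z_i = \mathcal{T}(\hat{x}_i)$ and compute
\begin{equation*}
    z_{i+1} = \tilde{f}(z_i, u_i, y_i) = \tilde{f}(\mathcal{T}(\hat{x}_i), u_i, y_i) = \bigl(\tilde{f}(\cdot, u_i, y_i) \circ \mathcal{T}\bigr)(\hat{x}_i).
\end{equation*}
Invoking topological equivalence from Definition~\ref{def:top-equiv}, namely $\tilde{f}(\cdot,u,y) \circ \mathcal{T} = \mathcal{T} \circ \hat{f}(\cdot,u,y)$ for all admissible $u,y$, this equals $\mathcal{T}(\hat{f}(\hat{x}_i,u_i,y_i)) = \mathcal{T}(\hat{x}_{i+1})$, completing the induction.

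There is not really a ``hard part'' in the argument itself, since the conjugacy relation between $\tilde{f}$ and $\hat{f}$ is tailor-made to propagate through iterations. The only subtlety to double-check is whether topological equivalence as stated covers the inputs actually used along the trajectory: Definition~\ref{def:top-equiv} requires the conjugacy to hold for all $u \in \mathbb{R}^{n_u}$ and $y \in \mathbb{R}^{n_y}$ (interpreting the ``$y \in \mathbb{R}^{n_x}$'' in the definition as a typo for $\mathbb{R}^{n_y}$), which in particular covers every $(u_i,y_i)$ in the given sequences, so the inductive step is valid at each index. One might also briefly remark that the innovation-augmented model \eqref{eq:latent-dyn-obs} is consistent with this argument provided one substitutes $\tilde{f}(z,u,y,\hat{y})$ for $\tilde{f}(z,u,y)$ and redefines topological equivalence in that enlarged argument list, but within the scope of Lemma~\ref{lem:eqv-safe} the clean state-space form \eqref{eq:latent-dyn} suffices.
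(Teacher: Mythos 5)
Your proposal is correct and follows essentially the same route as the paper: induction establishing $z_i = \mathcal{T}(\hat{x}_i)$ via the conjugacy $\tilde{f}(\cdot,u,y)\circ\mathcal{T} = \mathcal{T}\circ\hat{f}(\cdot,u,y)$, then concluding $z_i \in \mathcal{T}(\mathcal{S})$ from $\hat{x}_i \in \mathcal{S}$. Your side remarks (the $\mathbb{R}^{n_y}$ typo in Definition~\ref{def:top-equiv} and the innovation-model caveat) are reasonable observations but do not change the argument.
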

\begin{proof}
    From Definition \ref{def:top-equiv} we have that $\tilde{f}(\mathcal{T}(\hat{x}_k), u_k,y_k) = \mathcal{T}\left(\hat{f}(\hat{x}_k,u_k,y_k)\right) = \mathcal{T}(\hat{x}_{k+1})$. If it holds that $z_k = \mathcal{T}(\hat{x}_k)$ then we get that $z_{k+1} = \tilde{f}(z_{k},u_k,y_k) = \mathcal{T}(\hat{f}(\hat{x}_k,u_k,y_k))=\mathcal{T}(\hat{x}_{k+1})$. Since the base case $z_0 = \mathcal{T}(\hat{x}_0)$ holds, it follows by induction that $z_i = \mathcal{T}(\hat{x}_i),\quad \forall i =   0, \dots, N.$ 
    This together with the assumption that $\hat{x}_i \in \mathcal{S}$ gives the desired result that $z_i \in \mathcal{T}(\mathcal{S})$. 
\end{proof}

\paragraph{Linear special case}
When $\hat{f}$ is linear, the model class in \eqref{eq:lin-ss} contains several models that are topologically equivalent to $\hat{f}$ through a homeomorphism $\mathcal{T}$ that is linear ($\mathcal{T}$ is better known as a \textit{similarity transform} in this context.) Moreover, it is well known that subspace methods such as \texttt{N4SID} \citep{VANOVERSCHEE199475} identify one of these topologically equivalent models (cf., e.g., Chapter 2 in  \cite{van2012subspace}.)

\begin{remark}[Linear observers]
We want to emphasize that, related to Remark \ref{rem:inh-struct}, the special case of $\hat{f}$ being linear is quite common in practice since $\hat{f}$ is the observer dynamics that is designed by a user. For straightforward synthesis,  linearized models are commonly used for observer design. For example, this is the case when a classic Kalman filter is used.
\end{remark}

Lemma \ref{lem:eqv-safe} implies that latent trajectories give information about the underlying safe set $\mathcal{S}$ if the system is assumed to be safe (i.e., if the safety filter is deployed). Since, again, both $\mathcal{S}$ and $\mathcal{X}$ are unknown, we will leverage Lemma \ref{lem:eqv-safe} to approximate $\mathcal{T}(\mathcal{S})$ instead.  Let $\mathcal{Z}$ be a set that covers the latent trajectory $\{z_i\}_{i=1}^N$ generated by forward simulation with the data in $\mathcal{D}$. Moreover, let $\mathcal{Z}$ have an explicit representations as a super-level sets of the form $\mathcal{Z} = \{z : \tilde{h}_S(z) \geq 0 \}.$
While any cover of this form suffices for the attacker, two common covers are the convex hull of the trajectory, or a minimum covering ellipse. 

{\color{black}
\paragraph{Convex hull}
Given a trajectory $\{z_i\}_{i=1}^N$, a canonical way of getting a set that covers these points is as the convex hull $\mathcal{Z} = \text{Conv}{\{z_i\}_{i=1}}$, which can be put in the form of a super-level set $\{z : \tilde{h}(z) \geq 0\}$ by using its half-plane representation rather than its vertex representation.

\paragraph{Minimum covering ellipse} Another way of constructing $\mathcal{Z}$ is to find a matrix and vector $(Q,v)$ that define the ellipse $\{z :  \|Qz-v\|_2^2 \leq 1\}$ containing all of the data points in $\mathcal{D}$. That is, we want a matrix $Q \succ 0$ and translation $v\in \mathbb{R}^{n_z}$  such that  \begin{equation}
  \|Qz_i-v\|^2_2 \leq 1,\quad \forall z_i \in \mathcal{D},
\end{equation}
corresponding to $\tilde{h}_S(z) = 1-\|Qz-v\|^2_2$, and, in turn, $\nabla \tilde{h}_S(z) = 2(v-Qz)$.

The following semi-definite program (SDP) finds the smallest ellipsoid (in the $-\logdet$ sense) that covers all data points: 
\begin{equation}
    \label{eq:opt-ellipse}
        \min_{Q,v} -\logdet Q \quad \text{subject to } \|Q z_i -v \|^2_2 \leq 1,\:\: \forall z_i \in \mathcal{D}. 
\end{equation}
\begin{remark}[SOCP reformulation]
    Note that the problem in \eqref{eq:opt-ellipse} can be reformulated as a second-order cone program (SOCP), which improves the tractability of computing $\mathcal{Z}$. 
\end{remark}

So far we have not made any particular assumption on the shape of $\mathcal{S}$, other than that it can be described as a super-level set. In practice, however, the safety set $\mathcal{S}$ is often constructed as either a polyhedron or an ellipse.
In light of this, the following two lemmas relates $\mathcal{S}$ and $\mathcal{T}(\mathcal{S})$ for these cases when $\hat{f}$ is linear. The lemmas follows directly from the homeomorphism $\mathcal{T}$ being a similarity transform in the linear case.
\begin{lemma}
    Assume that $\hat{f}$ is linear and $\mathcal{S}$ is a polyhedron.  If $\tilde{f}$ is topologically equivalent to $\hat{f}$ through the homeomorphism $\mathcal{T}$, then $\mathcal{T}(S)$ is also a polyhedron.
\end{lemma}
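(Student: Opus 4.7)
The plan is to leverage the structural observation made immediately before the lemma: when $\hat{f}$ is linear, any topologically equivalent model in the class \eqref{eq:lin-ss} is related to $\hat{f}$ by a linear similarity transform. In particular, the homeomorphism $\mathcal{T}$ is of the form $\mathcal{T}(x) = T x$ for some invertible matrix $T \in \mathbb{R}^{n_x \times n_x}$. Once this is in hand, the rest of the proof is a standard verification that invertible linear maps preserve the class of polyhedra; this is where the bulk of the (very short) argument lies.

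First I would justify that $\mathcal{T}$ is linear and invertible. The preceding paragraph already asserts this as a consequence of the equivalence of state-space realizations of a given transfer function up to similarity transforms; nothing further needs to be derived. Next, I would write the polyhedron $\mathcal{S}$ explicitly as $\mathcal{S} = \{x \in \mathbb{R}^{n_x} : A x \leq b\}$ for some $A\in\mathbb{R}^{m\times n_x}$ and $b\in\mathbb{R}^m$, which is equivalent to the super-level set representation in \eqref{eq:safeset} with $h_S(x) = b - A x \geq 0$ taken componentwise.

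Finally, I would compute the image directly. By definition, $z \in \mathcal{T}(\mathcal{S})$ iff there exists $x\in\mathcal{S}$ with $z = T x$, which by invertibility is equivalent to $x = T^{-1} z \in \mathcal{S}$, i.e., to $A T^{-1} z \leq b$. Hence
\begin{equation*}
    \mathcal{T}(\mathcal{S}) = \{z \in \mathbb{R}^{n_x} : (A T^{-1}) z \leq b\},
\end{equation*}
which is the intersection of the $m$ closed half-spaces defined by the rows of $A T^{-1}$ and the components of $b$. This is a polyhedron, as required. The statement also gives an explicit super-level set representation $\mathcal{T}(\mathcal{S}) = \{z : \tilde{h}_S(z) \geq 0\}$ with $\tilde{h}_S(z) = b - A T^{-1} z$, which is a convenient by-product for the attack construction in the following sections.

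The hard part is genuinely absent here: once the linearity of $\mathcal{T}$ is accepted from the preceding discussion, the proof reduces to observing that affine inequalities are transported to affine inequalities by invertible linear changes of coordinates. The only place to be slightly careful is ensuring that the polyhedral representation is preserved as a super-level set in the sense of \eqref{eq:safeset}, so that the estimated safe set in latent space matches the framework used throughout the paper.
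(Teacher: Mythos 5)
Your proof is correct and takes essentially the same route as the paper, which simply asserts that the result ``follows directly from the homeomorphism $\mathcal{T}$ being a similarity transform in the linear case''; you have just filled in the routine verification that an invertible linear map sends $\{x : Ax \leq b\}$ to $\{z : AT^{-1}z \leq b\}$. No issues.
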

\begin{lemma}
    Assume that $\hat{f}$ is linear and $\mathcal{S}$ is an ellipse.  If $\tilde{f}$ is topologically equivalent to $\hat{f}$ through the homeomorphism $\mathcal{T}$, then $\mathcal{T}(S)$ is also an ellipse.
\end{lemma}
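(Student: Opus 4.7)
The plan is to exploit the linearity structure noted just before the lemma: when $\hat{f}$ is linear, the homeomorphism $\mathcal{T}$ from Definition~\ref{def:top-equiv} can be taken to be a similarity transform, i.e., an invertible linear (or affine) map $\mathcal{T}(x) = Tx$ with $T \in \mathbb{R}^{n_x \times n_x}$ nonsingular. Since the previous lemma is stated to follow directly from this fact, I would invoke the same observation here and focus the argument on showing that the ellipse representation is preserved under such a transformation.

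First, I would write $\mathcal{S}$ in its standard ellipsoidal form as a super-level set,
\begin{equation*}
    \mathcal{S} = \{x : 1 - (x-c)^T P (x-c) \geq 0\},
\end{equation*}
with $P \succ 0$ and center $c \in \mathbb{R}^{n_x}$. Next, for any $z \in \mathcal{T}(\mathcal{S})$, substitute $x = T^{-1}z$ into the defining inequality. A direct computation yields
\begin{equation*}
    \mathcal{T}(\mathcal{S}) = \{z : 1 - (z - Tc)^T\, T^{-T} P T^{-1}\, (z - Tc) \geq 0\}.
\end{equation*}
Setting $c' \triangleq Tc$ and $P' \triangleq T^{-T} P T^{-1}$ puts this in exactly the form required of an ellipse.

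The only thing left to verify is that $P'$ is positive definite, which will follow immediately from $P \succ 0$ and $T$ being nonsingular: for any nonzero $w$, $w^T P' w = (T^{-1}w)^T P (T^{-1}w) > 0$ since $T^{-1}w \neq 0$. I do not expect any genuine obstacle here; the lemma is essentially a coordinate-change statement, and the slight subtlety is just being precise about whether $\mathcal{T}$ is linear or affine. If the paper's notion of similarity transform allows an affine offset $\mathcal{T}(x)=Tx+t$, the same computation goes through verbatim with $c'$ replaced by $Tc+t$, so the conclusion is unchanged.
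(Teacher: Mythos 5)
Your proposal is correct and matches the paper's approach: the paper simply asserts that the lemma "follows directly from the homeomorphism $\mathcal{T}$ being a similarity transform in the linear case," and your computation of $\mathcal{T}(\mathcal{S}) = \{z : 1 - (z-Tc)^T T^{-T} P T^{-1} (z-Tc) \geq 0\}$ with the positive-definiteness check is exactly the omitted detail. Nothing further is needed.
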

}

This motivates using either a convex hull or a minimum covering ellipse for constructing $\mathcal{Z}$.

\subsubsection{Estimating detector threshold}
\label{sssec:thresh}
Even if the adversary can predict exactly where the observer's predicted measurement $\hat{y}$ is, it also needs to know the detector threshold $\delta>0$ to be able to know how much it can modify the measurement $y$. 
To ensure that modifications of $y$ are stealthy (i.e., that $\|y-\hat{y}\| < \delta$) the adversary, hence, needs to at least lower-bound the threshold $\delta$.

If we assume that the detector was not triggered while the data $\mathcal{D} = \{(u_i,y_i,\hat{y}_i)\}_{i}$ was collected,
\begin{equation}
    \label{eq:delta-bound}
    \|y-\hat{y}\| \leq \delta,\quad \forall (u,y,\hat{y}) \in \mathcal{D}.
\end{equation}
Hence, one way to find a lower-bound $\tilde{\delta}$ of $\delta$ is as
$\tilde{\delta} = \gamma \max_{(u,y,\hat{y})\in \mathcal{D}} \|y-\hat{y}\|,$
for some constant $\gamma \in (0,1]$; a smaller $\gamma$ reduce the risk of the adversary being detected, but also restricts how much the measurement $y$ can be modified. We formalize this in the following lemma the directly follows from \eqref{eq:delta-bound}.
\begin{lemma}
    Assume that the system uses a detector of the form \eqref{eq:detector} and that the detector was not triggered when collecting the data set $\mathcal{D}$. Then $\tilde{\delta}$ lower-bounds the detector threshold $\delta$(i.e., $\tilde{\delta} \leq \delta$.)
\end{lemma}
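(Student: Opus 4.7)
The plan is to directly apply the definition of the detector in \eqref{eq:detector} to each data point, observe that the maximum residual over the dataset is bounded by $\delta$, and then use the fact that $\gamma \in (0,1]$ to conclude. This is essentially a one-line chain of inequalities, and the lemma is flagged by the authors as an immediate consequence of \eqref{eq:delta-bound}, so no deep technique is required.

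First, I would invoke the assumption that the detector was not triggered during data collection. The detector of the form \eqref{eq:detector} triggers whenever $\|r\| = \|y-\hat{y}\| > \delta$, so the non-triggering hypothesis gives $\|y_i - \hat{y}_i\| \leq \delta$ for every sample $(u_i, y_i, \hat{y}_i) \in \mathcal{D}$, which is precisely \eqref{eq:delta-bound}. Since this bound holds for every element of $\mathcal{D}$, taking the maximum over $\mathcal{D}$ preserves it, yielding $\max_{(u,y,\hat{y}) \in \mathcal{D}} \|y - \hat{y}\| \leq \delta$.

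Next, I would multiply both sides by $\gamma \in (0,1]$. Since the maximum residual is non-negative, multiplying by a positive scalar preserves the inequality, so $\tilde{\delta} = \gamma \max_{(u,y,\hat{y}) \in \mathcal{D}} \|y - \hat{y}\| \leq \gamma \delta$. Finally, because $\gamma \leq 1$ and $\delta > 0$, we have $\gamma \delta \leq \delta$, and concatenating the two inequalities gives $\tilde{\delta} \leq \delta$, as required.

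The only subtle point — and certainly not a true obstacle — is making explicit use of the upper bound $\gamma \leq 1$, without which multiplication by $\gamma$ would not yield a lower bound on $\delta$. I would mention this explicitly in the proof for clarity, but otherwise the argument is a direct two-step deduction from the hypothesis and the definition of $\tilde{\delta}$.
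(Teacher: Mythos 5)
Your proof is correct and follows exactly the route the paper intends: the authors state that the lemma ``directly follows from \eqref{eq:delta-bound}'', and your argument is precisely that chain — non-triggering gives $\|y_i-\hat{y}_i\|\leq\delta$ on all of $\mathcal{D}$, hence the maximum is at most $\delta$, and multiplying by $\gamma\in(0,1]$ yields $\tilde{\delta}\leq\gamma\delta\leq\delta$. Your explicit note about needing $\gamma\leq 1$ is a reasonable clarification but does not constitute a departure from the paper's argument.
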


\subsection{Online phase}
\label{ssec:online}
In the online phase of the attack, the adversary performs a false-data attack by injecting false measurements $y^a$ according to the policy  

\begin{equation}
    \label{eq:id-attack}
    y^a(z_k) =  \argmax_{y^a}\: \nabla \tilde{h}_S(z_k)^T \left(\frac{\partial \tilde{f}}{\partial y}\Bigr|_{z=z_k} y^a\right) \text{subject to }           y^a-\tilde{h}(z_k)\| \leq \tilde{\delta},
\end{equation}
which is equivalent to the policy in \eqref{eq:ideal-attack}, but with an estimated system dynamics ($\tilde{f}$ and $\tilde{h}$), an estimated safe set ($\tilde{h}_S$), and a lower-bound $\tilde{\delta}$ of $\delta$. 
Moreover, the adversary updates the latent state $z$ by using the identified model of the form \eqref{eq:latent-dyn-obs} with $u$, $y^a$ and $\hat{y}$, resulting in the update
\begin{equation}
    z_{k+1} = \tilde{f}(z_k,u_k,y^a(z_k)) + \tilde{K}(z_k)(\tilde{h}(z_k) - \hat{y}_k) \triangleq \tilde{f}(z_k,u_k,y^a(z_k),\hat{y}_k).
\end{equation}
The attack is summarized in Algorithm~\ref{alg:attack}, and we will now make its relationship to the deactivation of safety filters more explicit.

{\color{black}
\subsubsection{Relationship between Algorithm 1 and deactivation of safety filters}
As mentioned before, the high-level idea behind the attack is to bias state estimates $\hat{x}$ toward the interior of $\mathcal{S}$. This is based on the set of safe control actions $\mathcal{U}_S(x)$ at $x$ often having the following property: $\mathcal{U}_S(x) \subset \mathcal{U}_S(\tilde{x})$ if $x$ is closer to $\partial \mathcal{S}$ than $\tilde{x}$, i.e., more control actions are safe further away from the border of the safe set. So if $x$ is closer to $\partial S$ than $\hat{x}$, we typically have that $\mathcal{U}_S(x) \subset \mathcal{U}_S(\hat{x})$, which, from Definition \ref{def:deact}, would deactivate the safety filter.

To relate this to the latent state $z$, Proposition \ref{prop:homeo} implies that biasing $\hat{x}$ into the interior of $\mathcal{S}$ is equivalent to biasing the latent state $z$ into the interior of $\mathcal{T}(\mathcal{S})$ if $\tilde{f}$ is topologically equivalent to $\hat{f}$ through the homeomorphism $\mathcal{T}$. Since the attacker does not have access to $\mathcal{T}(S)$ directly, they approximate it with $\mathcal{Z} = \{z : \tilde{h}_S(z) \geq 0\}$, which is a cover of points in $\mathcal{T}(S)$ (from Lemma \ref{lem:eqv-safe}).

To complete the argument, the measurement $y^a$ should bias the latent state $z$ towards the interior of $\mathcal{Z}$. Biasing $z$ towards the interior of $\mathcal{Z}$ means increasing the corresponding safety margin $\tilde{h}_S$. The measurement $y$ indirectly affects $\tilde{h}_S$ through the time update  $z_{k+1} = \tilde{f}(z_k,u_k,y)$. How a change in $y$ locally affects $\tilde{h}_S$ can be seen through the gradient $\nabla_y \left(h \circ \tilde{f}(z_k,u_k,y)\right)$. The chain rule then gives  

\begin{equation}
\nabla_y\left( h \circ \tilde{f}(z_k,u_k,y)\right) = \nabla \tilde{h}_S(z)^T \frac{\partial \tilde{f}}{\partial y}\Bigr|_{z=z_k},
\end{equation}
which is exactly the linear term in the objective of \eqref{eq:id-attack}. In other words, the objective of \eqref{eq:id-attack} aligns the false measurement $y^a$ in the direction that locally increases $\tilde{h}_S$ the most.
}

\begin{figure*}
{\centering
\begin{minipage}{.475\linewidth}
\begin{algorithm}[H]
    \caption{Data-driven false-data injection attack to deactivate safety filters.}
    \label{alg:attack}
    \textbf{Input:} $\tilde{f}, \tilde{h}, \tilde{h}_S,\tilde{\delta}$ (from offline phase) and $z_0$ \\
    $k \leftarrow 0$; \: $z \leftarrow z_0$\\
    \Repeat{
        observe $u_k$, $y_k$ and $\hat{y}_k$\\
            replace $y_k$ with $y^a$ by solving \eqref{eq:id-attack} for $z_k$ \\
        $z_{k+1} \leftarrow \tilde{f}(z_k,u_k,y_k,\hat{y}_k)$;\quad 
        $k \leftarrow k +1$
    }
\end{algorithm}
\end{minipage}
\hspace{1cm}
\begin{minipage}{.4\linewidth}
    \begin{figure}[H]
      \centering
      \begin{tikzpicture}[scale=0.8, transform shape]
          \usetikzlibrary{shapes,arrows,calc}
          \usetikzlibrary{calc}
          \usetikzlibrary{angles, quotes}
          \coordinate (hitch) at (0.81,1);
          \coordinate (anglehelp) at (0.81,3.3);
          \coordinate (pendulumtip) at (-0.5,3);
          \coordinate (uhelp) at (0.3,0.7);
          \draw[line width=2pt] (hitch) -- (pendulumtip);
          \draw[dashed] (hitch) -- (anglehelp);
          \draw[fill=gray] (pendulumtip) circle (3.5pt);
          \draw[fill=gray] (hitch) -- ++(-0.5,-0.5) -- ++(1,0) -- cycle;
          \draw[fill=black] (0.81,1) circle (1pt);
          \pic [red,draw,-latex, angle radius=11mm, angle eccentricity=1.2,thick] {angle = anglehelp--hitch--pendulumtip};
          \pic [blue,draw,-latex, angle radius=4mm, angle eccentricity=1,thick] {angle = pendulumtip--hitch--uhelp};

          \node[red] (test) at (0.425,2.3) {$\theta$};
          \node (mass) at (-0.1,3) {$m$};
          \node (length) at (-0.1,2) {$l$};
          \node[blue] (u) at (0.275,1.2) {$u$};
      \end{tikzpicture}
      \caption{\small Inverted pendulum with mass $m$ and length $l$. The angle $\theta$ is controlled by applying the torque $u$.}
      \label{fig:invpend}
    \end{figure}
\end{minipage}
}
\end{figure*}


\section{Numerical Experiments}
We exemplify the proposed attack on an inverted pendulum system that was also considered in \cite{wabersich2023sf} and \cite{alan2023control}, visualized in Figure \ref{fig:invpend}.
The dynamics of the system is 
    $\frac{\mathrm{d}}{\mathrm{d}t}
    \icol{
        \theta \\
        \dot{\theta}
    }
    = \icol{ 
        \dot{\theta} \\ 
        \frac{g}{l} \sin{\theta}
    }
    +\icol{
     0 \\
     \frac{1}{ml^2}
 }u,$ with parameter values $m=2$ kg, $l=1$ m , and $g=10$ m/s$^2$ (the same as in \citet{wabersich2023sf,alan2023control}). Moreover, 
the applied torque $u$ is limited between $\pm 3$ Nm, and the angle $\theta$ is constrained to be between $\pm 0.25$ radians out of safety reasons.
To ensure that these constraints are satisfied, a safety filter that retains the state in a safe set $\mathcal{S} = \{(\theta,\dot{\theta}): h_S(\theta,\dot{\theta}) \geq 0\}$ is used, where $h_S(\theta,\dot{\theta})$ is a control barrier function \citep{ames2017cbf}. Specifically, we use the control barrier function 
$h_S(\theta,\dot{\theta}) = 1 - 16 \theta^2 - 8\theta \dot\theta - 4 \dot{\theta}^2$
that is used in \cite{alan2023control}. The corresponding optimization problems in \eqref{eq:sf} are quadratic programs, which are solved using the quadratic programming solver \texttt{DAQP} \citep{arnstrom2022dual}. Code for all experiments is available at \url{https://github.com/darnstrom/ddsd-sf}.

Only the angle $\theta$ is measured, with measurement noise that is normally distributed with zero-mean and standard deviation $10^{-3}$ radians. The controller estimates $\dot{\theta}$ with an observer of the form \eqref{eq:observer-dyn}, where the gain $K$ is the static Kalman gain for a linearized model around the origin. 

For the attack, the adversary first collects data from the system by observing $(u,y,\hat{y})$ for 100 seconds. The collected data is shown in Figure~\ref{fig:iddata}.
{\centering
    \begin{figure*}
        \begin{minipage}{.5\textwidth}
            \vspace{0pt}
            \begin{figure}[H]
                \centering
                \begin{tikzpicture}[scale=1]
    \begin{axis}[
        xmin=0,xmax=100,
        ytick={-0.1,0.1},
        xticklabel=\empty,
        legend style={at ={(0.5,2.2)},anchor=north}, ymajorgrids,yminorgrids,xmajorgrids,
        x post scale=0.9,
        y post scale=0.1,
        legend cell align={left},legend columns=3,
        ]
        \addplot [set19c1,thick] table [x={t}, y={y}] {\iddata}; 
        \addlegendimage{set19c3, very thick};
        \addlegendimage{set19c2, very thick};
        \legend{$y$,$\hat{y}$, $u$};
    \end{axis}
\end{tikzpicture}
\begin{tikzpicture}[scale=1]
    \begin{axis}[
        xmin=0,xmax=100,
        ytick={-0.1,0.1},
        xticklabel=\empty,
        legend style={at ={(0.5,1.3)},anchor=north}, ymajorgrids,yminorgrids,xmajorgrids,
        x post scale=0.9,
        y post scale=0.1,
        legend cell align={left},legend columns=2,
        ]
        \addplot [set19c3,thick] table [x={t}, y={yh}] {\iddata}; 
    \end{axis}
\end{tikzpicture}
\begin{tikzpicture}[scale=1]
    \begin{axis}[
        xmin=0,xmax=100,
        ytick={-3,3},
        xlabel={Time [s]},
        legend style={at ={(0.5,1.2)},anchor=north}, ymajorgrids,yminorgrids,xmajorgrids,
        x post scale=0.9,
        y post scale=0.1,
        legend cell align={left},legend columns=2,
        ]
        \addplot [set19c2,thick] table [x={t}, y={u}] {\iddata}; 
    \end{axis}
    \path (-1.1575,0) -- (0,0);
\end{tikzpicture}
                \vspace{-20pt}
                \caption{Collected data $\mathcal{D}$ from the system, used for learning $\tilde{f}$, $\tilde{h}$, $\tilde{h}_S$ and $\tilde{\delta}$. 
                }
                \label{fig:iddata}
            \end{figure}
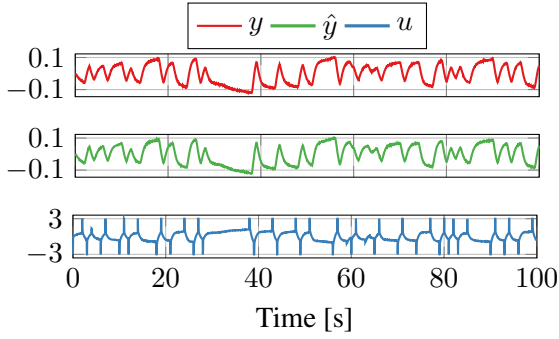
        \end{minipage}
        \quad
        \begin{minipage}{.5\textwidth}
            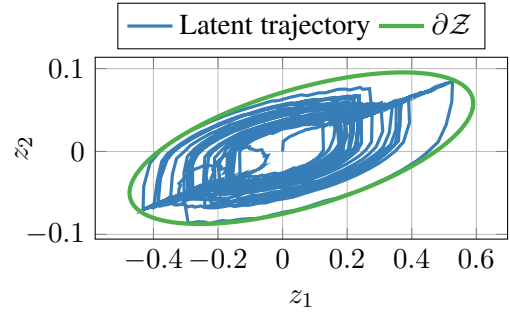
\begin{figure}[H]
                \centering
                \begin{tikzpicture}[scale=1]
    \pgfplotsset{compat = 1.3}
    \begin{axis}[
        ytick={-0.1,0,0.1},
        xlabel={$z_1$},
        ylabel={$z_2$},
        ylabel shift = -10 pt,
        legend style={at ={(0.5,1.3)},anchor=north}, ymajorgrids,yminorgrids,xmajorgrids,
        x post scale=0.8,
        y post scale=0.425,
        legend cell align={left},legend columns=2,
        ]
        \addplot [set19c2,very thick] table [x={x}, y={y}] {\latenttraj}; 
        \addplot [set19c3,ultra thick] table [x={x}, y={y}] {\latentsafeset}; 
        \legend{Latent trajectory, $\partial \mathcal{Z}$}
    \end{axis}
\end{tikzpicture}
                \caption{Latent trajectory from simulating \eqref{eq:latent-dyn} with the collected data $\mathcal{D}$.
                }
                \label{fig:latent}
            \end{figure}
        \end{minipage}
    \end{figure*}
}
Based on this data, the adversary identifies a linear state-space model of the form \eqref{eq:lin-ss}, with dimension $n_z=2$, using \texttt{N4SID} \citep{VANOVERSCHEE199475}. 

Using the identified state-space model, the adversary reuses the observed data $(u,y,\hat{y})$ and simulate the system, producing the latent trajectory $\{z_i\}$ shown in Figure~\ref{fig:latent}. An approximate safe set $\mathcal{Z} = \{z : \tilde{h}_S(z) \geq 0\}$ in latent space is then determined by finding a minimally covering ellipse for the entire latent trajectory, also shown in Figure~\ref{fig:latent}.

Next, the attacker performs the online phase of the attack, which uses the identified $\tilde{f}, \tilde{h}$ and $\tilde{h}_S$ to attack the system with Algorithm~\ref{alg:attack}. During the attack, the controller that produces desired control actions is an exact-linearized LQR controller given by 
\begin{equation}
    u_{c}(\theta,\hat{\theta}) = -m l g \sin({\theta}) - \begin{bmatrix}
    1.5 & 1.5 
\end{bmatrix} \begin{bmatrix}
    {\theta} \\ \dot{{\theta}} 
\end{bmatrix}.
\end{equation}
The attack scenario starts with $\dot{\theta} = 0.5$ and $\theta = -0.2$ at $t=0$.
To remain stealthy, the adversary does not initiate the attack directly at $t=0$; this is due to the latent state at $t=0$ being unknown, which, hence, requires some time for $z$ to converge such that $\tilde{h}(z) \approx \hat{y}$. The adversary initiate the attack at $t = 1$ second, which, as in shown in Figure \ref{subfig:trajectory}, drives the state $x$ outside the safe set $\mathcal{S}$ while the state estimates $\hat{x}$ remain inside $\mathcal{S}$, creating a false belief of safety.
Moreover, the corresponding residuals used by the anomaly detector are shown Figure \ref{subfig:resids}, where one can see that their magnitudes do not exceed the magnitude of the measurement noise. Note, however, that there is a clear bias in the residual after the attack. This can be used to create an anomaly detector that specifically detects attacks that try to deactivate safety filters, complementing classical detectors such as \eqref{eq:detector}. For details of how to construct such a detector, see \cite{ecc24}. 
\begin{figure}
  \centering
  \subfigure[Actual and percieved state trajectory \label{subfig:trajectory}]{%
      \pgfplotsset{compat=1.5.1}
\begin{tikzpicture}[scale=1]
    \begin{axis}[
        width=0.45\textwidth,
        height=0.275\textwidth,
        xmin=-0.6,xmax=0.6,
        ymin=-0.3,ymax=0.3,
        ylabel shift = -5 pt,
        xlabel={$\dot{\theta}$},
        ylabel={${\theta}$},
        legend style={at ={(0.5,1.25)},anchor=north}, ymajorgrids,yminorgrids,xmajorgrids,
        legend cell align={left},legend columns=3,
        axis equal,
        ]
        \filldraw[set19c3,fill opacity=0.2] 
        (axis cs:0,0) ellipse [y radius=0.5988,x 
        radius=0.2410,rotate=-106.845];
        \addplot [set19c2, very thick] table [x={x2}, y={x1}] {\attack};
        \addplot [set19c1, very thick, dashed] table [x={xh2}, y={xh1}] {\attack};
        \node[black] (atk-from) at (axis cs:-0.1138,0.18) {\scriptsize Attack starts};

        \draw[-latex] (axis cs:-0.1138,0.15) -- (axis cs:-0.1138,0.042); 

        \addlegendimage{area legend,fill=set19c3,opacity=0.2};
        \legend{$x$, $\hat{x}$,$\mathcal{S}$}
    \end{axis}
\end{tikzpicture}
  }
  \subfigure[Residual $r = y-\hat{y}$ \label{subfig:resids}]{%
      \begin{tikzpicture}[scale=1]
    \begin{axis}[
        xmin=0,xmax=2.15,
        ymin=-0.005,ymax=0.005,
        ylabel near ticks,
        xlabel={Time [s]},
        ylabel={$r$},
        legend style={at ={(0.5,1.25)},anchor=north}, ymajorgrids,yminorgrids,xmajorgrids,
        x post scale=0.9,
        y post scale=0.475,
        legend cell align={left},legend columns=3,
        ]
        \draw[red,thick,dashed] (axis cs:1,0.005)--(axis cs:1,-0.005);
        \addplot [set19c2,very thick] table [x={t}, y={r}] {\attack}; 
        \legend{Residual};
        \node[red] (atk) at (axis cs:1.225,-0.003) {\scriptsize Attack starts};
    \end{axis}
\end{tikzpicture}
}
\caption{The \textcolor{set19c2}{actual} and \textcolor{set19c1}{perceived} state trajectories, and the corresponding residual $r$, when the false-data attack in Algorithm \ref{alg:attack} is initiated at $t=1$. Consequently, the state leaves the \textcolor{set19c3}{safe set}.}
\end{figure}
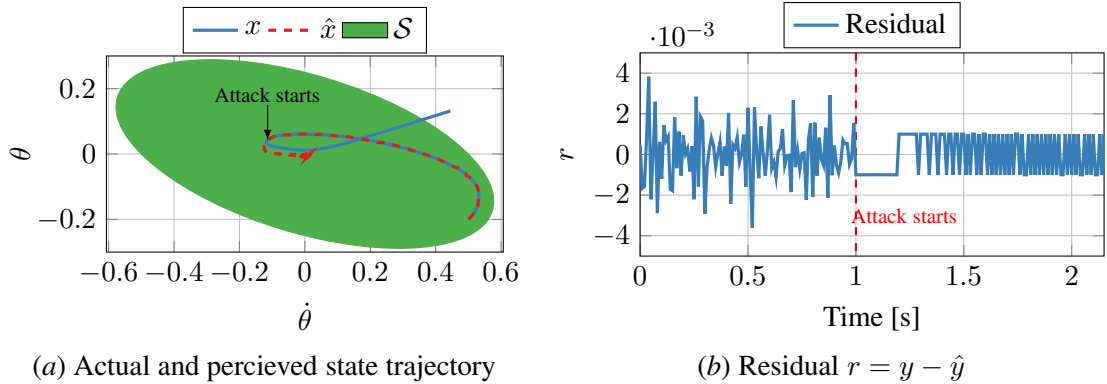

\section{Conclusion}
We have shown how a simple, data-driven, and stealthy false-data injection attack can deactivate safety filters. In contrast to the work in \cite{ecc24}, the adversary does not use any prior knowledge about the dynamics, safe set, or observer gain. The attack injects false sensor measurements to bias latent states to the interior of an estimated safety region in a latent embedding, which, in turn, biases state estimates to the interior of the actual safety region. This can make the safety filter let through unsafe control actions. 
The attack was shown to be able to successfully make a inverted pendulum system leave its safe set that a safety filter is supposed to keep invariant.


\acks{This work is supported by the Swedish Foundation for Strategic Research.}

\bibliography{lib.bib}

\end{document}